\algrenewcommand\algorithmicforall{\textbf{foreach}}
\algrenewcommand\algorithmicindent{.8em}
\newcommand{\ceil}[1]{\left\lceil #1 \right\rceil}
\newcommand{\widefig}[0]{0.8\linewidth}
\newtheorem{theorem}{Theorem}[section]
\newtheorem{corollary}{Corollary}[theorem]
\newtheorem{lemma}[theorem]{Lemma}
\newtheorem{proposition}[theorem]{Proposition}
\newtheorem{definition}{Definition}
\newcounter{example}[section]
\title{Linear Decomposition of the Majority Boolean Function using the Ones on Smaller Variables
}
\author[NTU]{Anupam Chattopadhyay}
\author[imec]{Debjyoti Bhattacharjee}
\author[isi]{Subhamoy Maitra}
\address[NTU]{CCDS, NTU, Singapore}
\address[imec]{imec, Kapeldreef 75, Leuven, 3000, Belgium}
\address[isi]{Indian Statistical Institute, India}
\begin{document}

\begin{frontmatter}

\begin{abstract}
A long-investigated problem in circuit complexity theory is to decompose an $n$-input or $n$-variable Majority Boolean function (call it $M_n$) using $k$-input ones ($M_k$), $k < n$, where the objective is to achieve the decomposition using fewest $M_k$'s. An $\mathcal{O}(n)$ decomposition for $M_n$ has been proposed recently with $k=3$. However, for an arbitrary value of $k$, no such construction exists even though there are several works reporting continual improvement of lower bounds, finally achieving an optimal lower bound $\Omega(\frac{n}{k}\log k)$ as provided by  Lecomte et. al., in CCC '22. In this direction, here we propose two decomposition procedures for $M_n$, utilizing counter trees and restricted partition functions, respectively. The construction technique based on counter tree requires $\mathcal{O}(n)$ such many $M_k$ functions, hence presenting a construction closest to the optimal lower bound, reported so far. The decomposition technique using restricted partition functions present a novel link between Majority Boolean function construction and elementary number theory. These decomposition techniques close a gap in circuit complexity studies and are also useful for leveraging emerging computing technologies.
\end{abstract}



\begin{keyword}
Boolean functions \sep logic synthesis \sep majority decomposition \sep technology-independent synthesis
\end{keyword}

\end{frontmatter}

\section{Introduction}
\noindent Majority Boolean functions hold a special place among the classes of Boolean functions. Purely from the circuit complexity theory standpoint, Majority Boolean functions belong to the complexity class \texttt{TC$^0$}, and conjectured to strictly separate it from the complexity class \texttt{ACC}~\cite{acc_lower_bound}. There has been wide-ranging studies  related to Majority Boolean functions, starting with the early works to explore its capability to capture complex  functions~\cite{maj_depth_two_expander,maj_prefix}, implementing circuits with low count/depth of Majority and Threshold functions~\cite{constant_depth_division,constant_depth_transformer,opt_depth_threshold_mult,log_depth_threshold_div}, and linking Majority with other Boolean functions such as \textit{Threshold}, \textit{And}, \textit{Or}~\cite{threshold_by_majority,kulikov_decomp,maj_notes, majority_by_majority}. In this context, a challenging and long-studied problem is how to decompose a large, \mbox{$n$-input} Majority Boolean functions in terms of smaller functions with $k$ inputs, where $k < n$. This problem gained relevance in recent times due to the emergence of multiple computing technologies~\cite{molecular_maj,aqfp_maj,spinwave_majority,MIG_applications,testa_phd} with native realization of Majority Boolean functions. Furthermore, Majority-based logic circuit representations have demonstrated superior performance~\cite{MIG_applications,mig} compared to traditional And-Inverter Graph (AIG), prompting commercial adoption of Majority-Inverter Graph (MIG) in the synthesis toolsuite~\cite{snps-epfl}.

Circuit realization using Majority logic gates was investigated in 1960s by Akers~\cite{akers_synth} resulting in several follow-up works~\cite{akers_logic_array,riseman_maj_akers}. This line of work, after being superseded by And-Or based logic circuits, got into prominence again due to effective Majority-based logic synthesis flows~\cite{mig,mathias_xor_maj} and novel computing fabrics offering Majority logic blocks~\cite{aqfp_maj,spinwave_majority}. While these logic circuits are mostly restricted to $3$-input Majority gates, there are demonstrations with larger input sizes~\cite{maj_dram} with parallel development axiomatic system for arbitrary inputs~\cite{maj_axiom}. Therefore, decomposition of Majority Boolean functions presents an important research objective - both from theoretical and practical viewpoint.

The decomposition of $n$-input Majority ($M_n$) Boolean functions using $3$-input Majority ($M_3$) functions have been studied in~\cite{Testa:261167,VALIANT1984363} and eventually was realized with circuit size of $\mathcal{O}(n)$~\cite{decomp_tcad}. However, an efficient decomposition of $M_n$ using arbitrary $M_k$, where $k < n$ remained an open problem. Towards that goal, a lower bound has been presented recently in~\cite{complexity_ccc}. In this work, we propose two constructions, achieving circuit size closest to the lower bound reported. More specifically, while the optimal lower bound reported in~\cite{complexity_ccc} is $\Omega(\frac{n}{k}\log k)$, we demonstrate that one of our proposed constructions could achieve a compositional complexity of $\mathcal{O}(n)$, which presents the first linear decomposition of $M_n$ using arbitrary $M_k$. Our results are immediately extensible to the decomposition of Threshold Boolean functions as well. The tools used here are related to combinatorial techniques and discrete structures including the counter graph approach and partition functions.

The rest of the manuscript is organized as follows. Section~\ref{sec:prelim} presents the formal introduction of the topics necessary for our constructions. The first construction, using counter graph approach, is elaborated in Section~\ref{sec:counter_tree}. Section~\ref{sec:partition_function} is where we propose our second construction using partition functions. Both the aforementioned constructions are also studied theoretically to identify the count of $M_k$ functions. Finally, the counter graph construction is implemented using state-of-the-art logic circuit packages and benchmarked with large circuits. The results are presented in Section~\ref{sec:exp}. The work is summarized with future directions in Section~\ref{sec:conclusion}.

\section{Preliminaries}\label{sec:prelim}
\noindent We define the basic Boolean functions and terminologies used in the rest of the paper. Let $\mathbb{B} = \{0, 1\}$. An $n$-input Boolean function \mbox{$\mathbb{B}^n \rightarrow \mathbb{B}$ maps} the input truth values to a single output truth value.
Let $n$ inputs of a Boolean function $f$ be $x_1, x_2, \dots, x_n$.

\begin{definition}
For $n$ Boolean inputs $x_i~(1 \le i \le n)$, the Hamming Weight~(HW) is defined as
\begin{equation}
HW(x_1,\ldots, x_n) = \sum_{i=1}^{n}{x_i}
\end{equation}
\end{definition}

\begin{definition}
A \textit{Threshold Boolean function}~$T_n$ is defined as follows,
\begin{equation}
    T_n(x_1,\ldots,x_n) = \begin{cases}
    1 & \text{when} \left( HW(x_1, \ldots, x_n) \right) \geq t \\
    0 & \text{otherwise}
    \end{cases}
\end{equation} where $t$ is the defined threshold and $x_i$ represents the Boolean inputs~$  (1 \le i \le n)$.
\end{definition}
In the following, we denote $T_n$ with threshold $t$ as $T_n^{t}$.

\begin{definition}
An $n$-input \textit{Majority Boolean function}, denoted as $Maj_n$, where $n$ is odd, can be defined as a special case of the Threshold function, where the $t = \lceil n/2 \rceil$.
\end{definition}
\noindent For example, a $5$-input Majority function, $Maj_5$ yields true if at least $3$ of its inputs are true. $Maj_n$ is a monotone Boolean function, a property that has been utilized in earlier decomposition approaches~\cite{Testa:261167}.

In our decomposition procedure, the constituent variables of a Majority Boolean function are distinguished between, \textit{free} variables and constant \textit{control} variables, as in~\cite{complexity_ccc}. For example, in $M_7\{x_1, x_2, x_3, x_4, c_1, c_2, c_3\}$ the variables $x_i$ are input driven, which cannot be controlled. Whereas, the decomposition algorithm can fix the values of $c_i$ to be $0$ or $1$ as needed, hence termed as constant control variables. For the sake of simplicity, we do not introduce any specific definition for a Majority Boolean function, which contains some constant inputs. For example, $M_7$ with $c_1 = c_2 = c_3 = 1$, will behave as a $T_4^1$, if we consider $x_1, x_2, x_3, x_4$ as variables. On the other hand, with $c_1 = c_2 = 1, c_3 = 0$, $M_7$ will behave as $T_4^2$.
In general, we have the following technical result.

\begin{proposition}
    Consider $M_n$ with $\psi$ many control variables, and $\tau$ of them are fixed at 1 (naturally $n \geq \psi \geq \tau$). Then $M_n$ represents $T_{n-\psi}^{\lceil\frac{n}{2}\rceil-\tau}$.
\end{proposition}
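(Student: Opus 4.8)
The plan is to reduce the claim directly to the definition of the Majority function as a threshold function, by tracking how the fixed control variables shift the effective Hamming-weight budget. First I would recall that, by definition, $Maj_n = T_n^{\lceil n/2 \rceil}$, so $M_n$ outputs $1$ precisely when the Hamming weight of all $n$ inputs is at least $\lceil n/2 \rceil$.

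Next I would partition the $n$ inputs into the $n - \psi$ free variables and the $\psi$ control variables, and further split the control variables into the $\tau$ that are fixed at $1$ and the $\psi - \tau$ that are fixed at $0$. By additivity of the Hamming weight, the total weight of all $n$ inputs equals $HW(\text{free variables}) + \tau$, since each control variable set to $1$ contributes exactly $1$ to the sum and each control variable set to $0$ contributes $0$.

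Substituting this decomposition into the threshold condition, $M_n$ evaluates to $1$ if and only if $HW(\text{free variables}) + \tau \geq \lceil n/2 \rceil$, equivalently if and only if $HW(\text{free variables}) \geq \lceil n/2 \rceil - \tau$. This is exactly the defining condition of the Threshold Boolean function on the $n - \psi$ remaining free inputs with threshold $\lceil n/2 \rceil - \tau$, which establishes that $M_n$ represents $T_{n-\psi}^{\lceil n/2 \rceil - \tau}$.

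I do not expect a genuine obstacle here; the argument is essentially a one-line substitution once the Hamming weight is decomposed across free and fixed inputs. The only point warranting a remark is the behaviour at the extremes of the parameter range: when $\lceil n/2 \rceil - \tau \leq 0$ the resulting threshold function degenerates to the constant $1$, and when it exceeds $n - \psi$ it degenerates to the constant $0$. Both degenerate cases remain consistent with the threshold semantics, so the stated constraint $n \geq \psi \geq \tau$ is enough to keep the statement meaningful, and no separate case analysis is required.
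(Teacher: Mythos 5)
Your argument is correct and is exactly the standard justification: decomposing the Hamming weight as $HW(\text{free}) + \tau$ and substituting into the threshold condition $HW \geq \lceil n/2 \rceil$ immediately gives $T_{n-\psi}^{\lceil n/2\rceil - \tau}$. The paper states this proposition without proof (offering only the $M_7$ examples and the remark that $\psi - \tau \leq \lfloor n/2 \rfloor$ is needed for a meaningful expression), so your write-up simply makes explicit the one-line substitution the paper takes for granted, and your additional observation about the degenerate extremes is consistent with the paper's meaningfulness condition.
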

\noindent To have a meaningful expression,
$n -\psi \geq \lceil\frac{n}{2}\rceil -\tau$, i.e.,
$$\psi - \tau \leq \lfloor\frac{n}{2}\rfloor$$
With this let us now move to another combinatorial object in this regard.

\begin{definition} A Majority graph is defined as a directed, acyclic graph~(DAG) $G:(V, E)$, where each vertex $V$ denotes a $Maj_n$ Boolean function with edges $E$ connecting the vertices. Each vertex has an in-degree of $n$.
\end{definition}

\noindent The majority graph can accommodate only regular edges. For majority graph that allow complemented edges, we refer to these graph as Majority-Inverter Graph~(MIG)~\cite{mig}. Note that, Majority function in this work only refers to Majority Boolean functions.

\begin{definition}
A $(n:k)$ counter takes $n$ input bits and generates a $k$-bit binary representation of the number of input bits, which are valued as $1$.
\end{definition}
\noindent A binary full adder is essentially a $(3:2)$ counter, while the half adder is $(2:2)$ counter.
\begin{definition}
A Counter graph is a DAG where each node represents a counter operation, with multiple inputs and outputs.
\end{definition}
Counter graphs~(also referred to as ``compressor trees'') have shown to be useful in Boolean arithmetic circuit optimization~\cite{verma_counter_2007}. In our designs, Hamming weight computation is necessary, which is realized using counter graphs.

\begin{definition}
In number theory, a partition $\lambda$ of a non-negative integer $n$ is defined as the sequence $\lambda = \{\lambda_1, \lambda_2, \cdots\}$ such that $\lambda_i \geq 0$; $\lambda_1 \geq \lambda_2 \geq \cdots$ and $\sum_{i\geq1}\lambda_i = n$.
\end{definition}

\begin{definition}
\textit{Partition Function}~$p(n)$ represents the number of distinct ways of representing $n$ as a sum of positive integers.
\end{definition}

\noindent For example, $3$ can be partitioned as $\{1+1+1\}$, $\{2+1\}$ and $\{3\}$. Hence, $p(3) = 3$. Note that it is conventional to write the parts within $\lambda$ in a descending order and it is also conventional to suppress $0$ values within the partitions. Partitions of a number is very well studied problem in number theory with celebrated results from Ramanujan~\cite{Ramanujan1921CongruencePO}. Of our special interest in this work is a specific type of \textit{restricted partition functions}, which is defined in the following.



\begin{definition}
Restricted partition function $p_r(N, M, n)$ is defined as the number of partitions of the number $n$ using at most $M$ parts, where each part is at most $N$.
\end{definition}

\noindent For example, $p(4) = 5$. However, $p_r(3, 2, 4) = 2$, due to the partitions $\{2+2\}$ and $\{3+1\}$.


Closed-form expressions of $p(n)$ or $p_r(N, M, n)$ are not known. Indeed, determining approximations and bounds of partition functions is one of the most challenging problems in analytic number theory~\cite{pn_hardy_ramanjuan,pn_erdos,approx_pn_summary}. For our purpose, we refer to the following approximation of $p(n)$ after~\cite{pn_hardy_ramanjuan}.

\begin{equation}\label{eqn:ramanujan}
p(n) \simeq \frac{1}{4n\sqrt{3}}e^{\pi\sqrt{\frac{2n}{3}}}, n \rightarrow \infty
\end{equation}

\begin{algorithm}[t]
    \caption{Construction of counter graph for bin position $b$}
    \label{alg:counter_graph}
    {\small
    \begin{algorithmic}[1]
    \State  $inputs_b \gets bin_{map}[b]$
    \While{true}
        \State $inputs\_remaining \gets$ $inputs_b.size()$
        \If{$inputs\_remaining = 1$}
            \State \textbf{break}
        \EndIf
        \State $counter_{in} \gets \min(inputs\_remaining, l)$
        \State $counter\_operands \gets  inputs_b$[$0:counter_{in}$]
        \State \textbf{Erase} first $counter_{in}$ elements from $inputs_b$
        \State \textbf{Create} $counterOp \gets Counter(counter\_operands)$
        \State $counter_{out} \gets counterOp.numOutput()$
        \State \textbf{Add} $counterOp.getResult(0)$ to $inputs_b$
        \State $n_b \gets b$
        \For{$i \gets 1$ to $counter_{out} - 1$}
            \State $n_b \gets n_b + 1$
            \State \textbf{Add} $counterOp.getResult(i)$ to $bin_{map}[n_b]$
        \EndFor
    \EndWhile
    \end{algorithmic}
    }
\end{algorithm}

\section{Decomposition Procedure: Counter Graph Approach}\label{sec:counter_tree}
\noindent This decomposition procedure broadly follows the bound calculation approach narrated in~\cite{complexity_ccc}.
There are two main phases of this decomposition. \textit{First}, the input bits are partitioned into several groups and for each group, the Hamming Weight~(HW) is computed.
The algorithm to compute the HW of Boolean inputs is presented in Algorithm~\ref{alg:counter_graph}, using a counter graph where each counter has at most $l$ Boolean inputs. Let us consider that for each bit position~$b$, the corresponding bits are stored in the $bin_{map}$. The algorithm proceeds in a loop, where at each step, it determines the number of inputs to process, $counter_{in}$, as the minimum between the remaining inputs and the maximum allowed number of counter input~$l$. A counter operation, $Counter$, is then created using the selected inputs, and the first $counter_{in}$ elements are removed from the input set. The outputs of the counter operation are distributed: the first output is added back to the current bin~$b$, while the remaining outputs are assigned to subsequent bins in $bin_{map}$. This process continues until only one input remains in the bin, at which point the algorithm terminates. This process is iterated till all the bins have been processed.

\textit{Second}, the HWs are added, followed by an addition with a fixed threshold value to check the final carry bit, indicating a majority. In~\cite{decomp_tcad}, the same idea has been utilized, albeit for $M_n$ to $M_3$ decomposition.

The overall idea for the specific case of $M_9$ to $M_5$ decomposition
is shown in Figure~\ref{fig:maj_9_5}.  The inputs $x_i^1$~($0\le i \le 9$) and $t^1$ are in bin~$1$ and the final output of the Algorithm is $o_1$, corresponding to bin~$1$. At the end of processing bin~$1$, new outputs $c^2_4$, $c^2_6$, $c^2_8$ and $c^2_{10}$ have been added to bin~$2$ which are then processed using the Algorithm to produce output $o_2$. The processs is repeated till all the bins have a single output. Figure~\ref{fig:counter} shows how a single counter can be decomposed using majority operations. We formally define the  decomposition of the counter to majority function in Lemma~\ref{lemma:counter-maj}.

\begin{figure}[ht]
    \begin{subfigure}[t]{.5\columnwidth}
    \caption{}
    \label{fig:maj_9_5}
    \centering
    \includegraphics[width=0.8\textwidth]{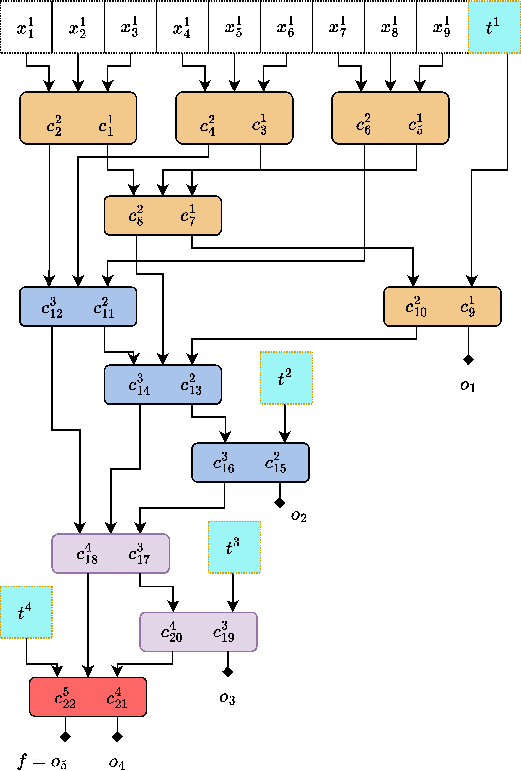}
    \end{subfigure}
    \begin{subfigure}[t]{.5\columnwidth}
        \caption{}
        \label{fig:counter}
        \centering
    \includegraphics[width=0.7\linewidth]{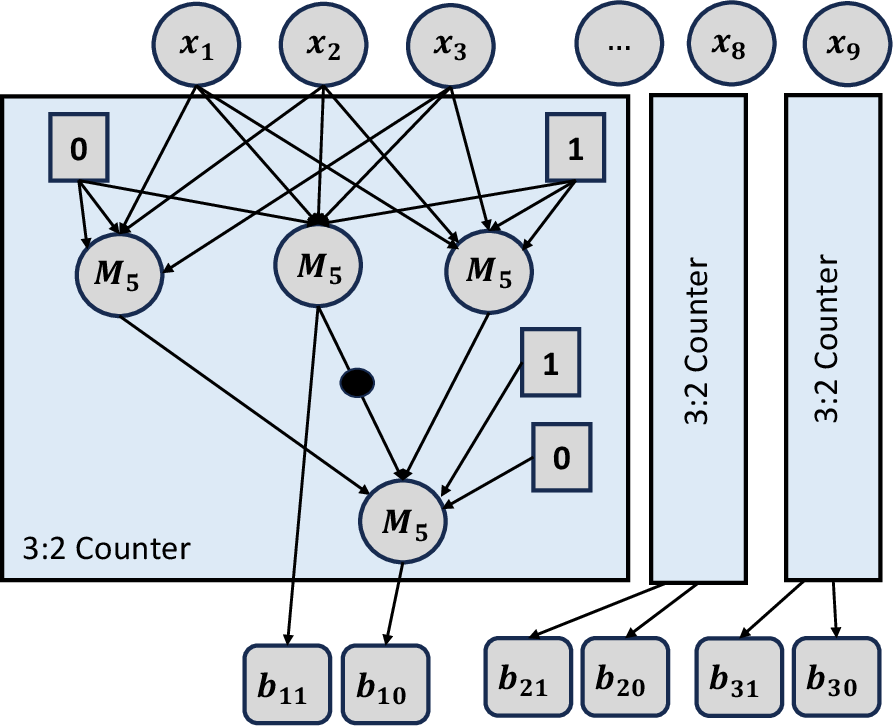}
    \end{subfigure}
    \caption{(\subref{fig:maj_9_5}) Computing $M_9$ of inputs $x_i^1$~($1 \le i \le 9$) using $(3:2)$ and $(2:2)$ counters and a threshold value~($t^4t^3t^2t^1$). The LSB of the counter output is on the right and the MSB is the leftmost output. For $M_9$, the threshold value is $1011$ since, this added with input HW generates carry bit at the last counter if $M_9$ is true. The set of counters used for computing each output bit is colored with the same color. $o_5$ is essentially the overflow bit indicating $M_9$. The counters can be expressed using $M_5$.
    (\subref{fig:counter})~Partially decomposed Counter Graph for $M_9$, where the $(3:2)$ counter is realized using $M_5$. The dot on the edge represents inversion of the input or in other words, a Boolean negation. The HW of each $3$-input partition are computed independently.}
\end{figure}


In what follows, we try to establish the compositional complexity obtained through this counter graph decomposition procedure. For the entire discussion, the input function is assumed to be $M_n$ and the target majority function as $M_k$. Before proceeding with the bound calculation, we present a few basic results.

\begin{corollary}\label{corollary:counter}
For decomposition of $M_n$ using $M_k$ via counter graphs, incoming $n$ inputs has to be partitioned in sets of at most $\lceil \frac{k}{2} \rceil$ size.
\end{corollary}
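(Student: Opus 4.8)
The plan is to read Corollary~\ref{corollary:counter} as a statement about the counter-input bound $l$ appearing in Algorithm~\ref{alg:counter_graph}. The blocks of the input partition are exactly the operand sets (\emph{counter\_operands}) handed to the first layer of counters, so every block has size at most $l$, and the claim is equivalent to showing $l \le \lceil k/2\rceil$ once we insist that each counter be realized purely from $M_k$ gates, as in Lemma~\ref{lemma:counter-maj} and Figure~\ref{fig:counter}. I would therefore fix a single block of $s$ input bits, argue that faithfully computing its Hamming weight forces $s \le \lceil k/2\rceil$, and then conclude that no block of the partition may be larger.

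For the core estimate I would lean entirely on the Proposition characterizing $M_k$ as a threshold gate. Designating $\psi = k-s$ of the inputs of an $M_k$ as controls and fixing $\tau$ of them to $1$ yields $T_{s}^{\lceil k/2\rceil-\tau}$; since $\tau \ge 0$, the \emph{largest} threshold any single $M_k$ can represent is $\lceil k/2\rceil$. A counter on $s$ inputs, however, must resolve every Hamming weight in $\{0,1,\dots,s\}$, hence must have access to the whole spectrum of threshold predicates $T_s^1,\dots,T_s^s$ of the block; in particular the extremal predicate $T_s^s$ (the conjunction of the entire block, detecting the all-ones case) has to be available. Matching the required threshold $t=s$ against the realizable ceiling $\lceil k/2\rceil$ gives $s \le \lceil k/2\rceil$, which is exactly the asserted block size. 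I would also verify sufficiency to see that the bound is tight: by the Proposition the single-gate-realizable thresholds on $s$ free variables are $T_s^t$ with $s-\lfloor k/2\rfloor \le t \le s$, and at $s=\lceil k/2\rceil$ this window widens to cover all of $1 \le t \le s$, so every symmetric output bit of the block's counter becomes reachable — while at $s = \lceil k/2\rceil + 1$ the top threshold $T_s^s$ drops out of reach and the construction can no longer distinguish $\mathrm{HW}=s$ from $\mathrm{HW}=s-1$. As a sanity check, for $k=5$ this yields $\lceil k/2\rceil = 3$, and indeed $T_3^1,T_3^2,T_3^3$ are each a single $M_5$ with two fixed controls, whereas $T_4^4$ is not a single $M_5$, consistent with the $(3{:}2)$ counter of Figure~\ref{fig:counter}.

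The step I expect to be the real obstacle is justifying that the binding predicates must be produced by single $M_k$ gates rather than by deeper compositions of $M_k$'s, which could in principle synthesize thresholds outside the one-gate window, together with the fact that the intermediate output bits of a counter are symmetric but \emph{non-monotone} (the sum bit is a parity, not a threshold at all). I would resolve both by routing the argument through Lemma~\ref{lemma:counter-maj}: it is there that each counter output is mapped onto $M_k$ gates, so the feasibility of that mapping is precisely the single-gate threshold range computed above, and the non-monotone bits are assembled from the same monotone thresholds combined with the edge inversions already permitted in the Majority-Inverter Graph. The weight of the corollary then rests on the monotone carry predicates the construction genuinely needs, culminating in $T_s^s$, which is exactly what caps the block size at $\lceil k/2\rceil$.
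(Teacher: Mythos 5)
Your proposal is correct and follows essentially the same route as the paper's one-line proof, which simply notes that $M_k$ must reserve $\lfloor k/2\rfloor$ inputs as constant controls and can therefore accommodate at most $\lceil k/2\rceil$ free variables. You arrive at the same count by pinning the constraint to the extremal threshold predicate $T_s^s$ via the Proposition, which is a more explicit justification of the same control-variable budget the paper asserts directly.
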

\begin{proof}
Since $M_k$ is the constituent majority function, which includes $\lfloor \frac{k}{2}\rfloor$ constant control inputs, it can accommodate at most $\lceil \frac{k}{2} \rceil$ free variables. Hence, the proof.
\end{proof}

\begin{lemma}\label{lemma:counter}
To add $m$ operands, each $b$-bit wide, the number of required $(n:k)$ counters are at most $(b(k+m)/n)$, where $k < n < b$.
\end{lemma}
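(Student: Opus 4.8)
The plan is to bound the counter count by a bit-accounting argument on the two-dimensional array of bits that the $m$ operands form. First I would lay the $m$ operands out as a table with $b$ columns indexed by bit position $0,\dots,b-1$, so that column $j$ initially holds at most $m$ bits and the whole array holds $mb$ bits; since adding $m$ numbers can generate carries, the array extends into $O(\log m)$ higher columns, a contribution I would fold into the bound rather than track exactly. The target is the number of $(n:k)$ counters that Algorithm~\ref{alg:counter_graph} instantiates while compressing this array down to a single number, where by definition each such counter ingests $n$ bits of a column and emits a $k$-bit weight.

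Next I would set up a conservation ledger for bits. Every $(n:k)$ counter absorbs up to $n$ bits from a single column and emits $k$ bits, one staying in that column as the local sum digit and the remaining $k-1$ migrating to strictly higher columns as carries. Reading $b(k+m)/n$ as $\tfrac{bm}{n}+\tfrac{bk}{n}$ suggests the accounting I want: the $\tfrac{bm}{n}$ term charges the $mb$ original bits against the $n$ bits each counter swallows, while the $\tfrac{bk}{n}$ term over-charges the carry bits that are produced and later re-consumed, bounding them by noting that meaningful carries occupy the $b$ columns and that each counter touching a column emits at most $k$ such bits. Summing the input bits fed to all counters then gives a total of at most $b(k+m)$, and dividing by the $n$ inputs each counter consumes yields $C \le b(k+m)/n$. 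The hypotheses $k<n<b$ are exactly what keep this clean: $k<n$ guarantees strict per-counter progress (net bit reduction $n-k>0$), and $n<b$ keeps the counter width below the operand width so the column-wise view is well defined.

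I expect the genuine difficulty to be the carry term. A naive geometric accounting — a first layer of $\tfrac{mb}{n}$ counters, each spawning $k$ bits to be re-reduced — telescopes to $\tfrac{mb}{n-k}$, which is the \emph{wrong} closed form; so the real work is to organize the ledger so that carries are charged against the $bk$ budget rather than re-inflating the $mb$ budget, and to verify that the boundary counters (those the $\min(inputs\_remaining, l)$ rule in Algorithm~\ref{alg:counter_graph} feeds fewer than $n$ bits) do not break the divide-by-$n$ step. Handling these two effects, carry re-consumption and under-full terminal counters, is where I would concentrate the care, the rest reducing to the routine conservation count above.
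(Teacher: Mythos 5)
Your setup---the $b$-column bit array, the observation that each $(n{:}k)$ counter consumes up to $n$ bits of one column and returns $k$ bits (one in place, $k-1$ as carries into higher columns), and the reading of the target bound as $\tfrac{bm}{n}+\tfrac{bk}{n}$---matches the paper's framing, but the proof has a genuine gap at exactly the point you flag yourself. Everything rests on the claim that the total number of bits ever fed into counters is at most $b(k+m)$, i.e.\ that the re-consumed carry bits contribute at most $bk$ across the whole computation. You do not prove this, and the one-line justification you sketch (``each counter touching a column emits at most $k$ such bits'') does not deliver it: what you would need is that each \emph{column} receives only about $k$ carries in total, but a column receives one carry from every counter in each of the $k-1$ columns below it, which is on the order of $(k-1)m/n$ bits and exceeds $k$ as soon as $m$ is larger than roughly $n$. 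Without that bound you fall back on the conservation count you yourself identify as insufficient, $C\le mb/(n-k)$, which is weaker than the claimed $b(k+m)/n$ (dramatically so when $n$ is close to $k$). Announcing that ``the real work is to organize the ledger'' and that this is ``where I would concentrate the care'' is an accurate diagnosis, but it is a statement of the problem rather than its solution.

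The paper closes this gap with a per-column recursion rather than a single global ledger: the LSB column needs $\lceil m/n\rceil$ counters, the carries these inject into the next column add roughly $\lceil m/n\rceil/n$ further counters there, and iterating yields a geometric series $\tfrac{m}{n}+\tfrac{m}{n^2}+\cdots+\tfrac{m}{n^{k-1}}$ per column plus $O(k/n)$ correction terms coming from the ceilings; summing this over the $b$ columns is what produces the $\tfrac{bm}{n}+\tfrac{bk}{n}$ split you were aiming for. If you want to complete your version, that recursion---or an equivalent amortization in which each carry is charged against the geometrically decaying counter counts up the column stack---is the missing ingredient; the under-full terminal counters you mention are comparatively harmless, since feeding a counter fewer than $n$ bits only helps the divide-by-$n$ step.
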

\begin{proof}
This derives directly from the multi-operand carry-save adders, where, now $(n:k)$ counters are used instead of standard $(3:2)$ counters or full-adders. Note that a column of binary values at a specific bit position generates carry bits propagating up to $(k-1)$ upper bit positions, when one $(n:k)$ counter is applied. At the LSB column, total number of bits is $m$ (from that many operands), and therefore total $\lceil\frac{m(k-1)}{n}\rceil$ carry bits are generated, corresponding to $\lceil\frac{m}{n}\rceil$ counters. For the immediately upper bit position, one carry bit from each of the counter will be appended. Therefore, for that bit position, the number of counters is $\frac{m + \lceil\frac{m}{n}\rceil}{n}$, which could be upper bounded with $\frac{m + (\frac{m}{n} + 1)}{n}$. Assuming $m > k$, the MSB position requires at most ($\frac{m}{n} + (\frac{m}{n^2} + \frac{1}{n}) + \cdots + (\frac{m}{n^{k-1}} + \frac{1}{n})$) counters. Hence, the total number of counters is upper bounded by
\begin{equation*}
\begin{split}
& b \times (\frac{m}{n} + (\frac{m}{n^2} + \frac{1}{n}) + \cdots + (\frac{m}{n^{k-1}} + \frac{1}{n})) \\
& = bm \times (\frac{k-2}{mn} + \frac{1}{n} + \frac{1}{n^2} + \cdots + \frac{1}{n^{k-1}}) \\
& = bm \times (\frac{k-2}{mn} + \frac{n^{k-1} - 1}{n^{k-1}(n-1)}) \\
& \leq b(k+m)/n.
\end{split}
\end{equation*}
\end{proof}

\begin{lemma}\label{lemma:counter-maj}
    An $(n:k)$ counter can be represented using a MIG, if the constituent majority function has at least additional $(n-1)$ constant control inputs, where $n$ is odd.
    \end{lemma}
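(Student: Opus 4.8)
The plan is to synthesise the counter one output bit at a time, reducing every bit to a Boolean combination of threshold functions on the $n$ data inputs and then realising each such threshold as a single majority gate padded with constants. Recall that an $(n:k)$ counter outputs the $k$-bit binary encoding of $HW(x_1,\dots,x_n)$, so its $i$-th output bit $o_i$ is the symmetric function whose value on a weight-$w$ input equals bit $i$ of $w$, that is $\lfloor w/2^{i}\rfloor \bmod 2$. Hence $o_i$ is completely described by the set $W_i=\{\,w:\lfloor w/2^{i}\rfloor \text{ is odd}\,\}\cap[0,n]$ of weights at which it equals $1$.

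First I would write $W_i$ as a disjoint union of integer intervals $[a,b]$: for the most significant bit this is the single interval $[2^{k-1},n]$, while for a lower bit $W_i$ is the periodic family $[2^{i},2^{i+1}\!-\!1],[3\cdot 2^{i},4\cdot 2^{i}\!-\!1],\dots$ truncated to $[0,n]$. The indicator of an interval $[a,b]$ is $T_n^{a}\wedge \overline{T_n^{\,b+1}}$ (with the convention $\overline{T_n^{\,n+1}}\equiv 1$ when $b=n$), so $o_i=\bigvee_{[a,b]\subseteq W_i}\big(T_n^{a}\wedge\overline{T_n^{\,b+1}}\big)$ is a Boolean combination of thresholds $T_n^{t}$ with $1\le t\le n$. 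Since all interval endpoints lie in $[0,n]$, only admissible thresholds appear; this is the one routine range check I would discharge.

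Next I would realise each ingredient inside the basis $M_{2n-1}$. Because $\lceil (2n-1)/2\rceil=n$, the earlier Proposition says that feeding the $n$ data inputs together with $n-1$ control inputs, $\tau$ of which are fixed to $1$, produces exactly $T_{n}^{\,n-\tau}$; letting $\tau$ range over $0,\dots,n-1$ therefore realises every threshold $T_n^{t}$, $t\in\{1,\dots,n\}$, as a single $M_{2n-1}$ gate using exactly $n-1$ constant controls. The remaining glue is cheap: complemented edges supply the negations for free in a MIG, and $AND(u,v)=Maj(u,v,0)$, $OR(u,v)=Maj(u,v,1)$ are themselves majorities that can be padded up to $M_{2n-1}$ with further constants. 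Composing these pieces yields a MIG over $M_{2n-1}$ gates for each $o_i$, and hence for the whole $(n:k)$ counter, which establishes the claim.

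The main obstacle is not the carry but the low-order bits: the MSB is a single monotone threshold and is immediate, whereas the non-monotone bits — above all the LSB, the parity $x_1\oplus\cdots\oplus x_n$ — force the interval decomposition and thereby drag in the extreme thresholds $T_n^{1}$ and, since $n$ is odd, $T_n^{n}$. These extremes are exactly what pin the control count down: because majority is defined only for odd fan-in, the smallest single majority expressing $T_n^{1}$ (an $n$-way $OR$) or $T_n^{n}$ (an $n$-way $AND$) has fan-in $2n-1$, i.e.\ needs at least $n-1$ constant inputs beyond the $n$ data inputs, which is precisely the ``at least $n-1$'' in the statement. I would still verify that each periodic bit's interval list and its $b=n$ boundary case are generated correctly, but this is bookkeeping rather than a genuine difficulty.
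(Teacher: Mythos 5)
Your proposal is correct and follows essentially the same route as the paper's proof: pad the $n$ data inputs with $n-1$ constant controls so that $M_{2n-1}$ realizes every threshold $T_n^t$ for $t=1,\dots,n$, then combine these thresholds with AND/OR/complemented edges to demarcate each Hamming weight and assemble the output bits. Your interval decomposition of each output bit's weight set is a more explicit (and slightly more economical) version of the paper's Stage I--III construction, but the key idea is identical.
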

    \begin{proof}
    Let us assume the $n$ \textit{free} variables of the $(n:k)$ counter to be $\{x_1, x_2, \cdots, x_n\}$. A majority function using all the free variables is of form $M_n(x_1, x_2, \cdots, x_n)$. Clearly $M_n$ can only distinguish between two cases, if the HW of the inputs is $\geq \lceil \frac{n}{2} \rceil$ or otherwise. In order to produce the correct counter output, the constituent majority function needs to demarcate each possible HW. Considering the corner case of an $(n:k)$ counter having a HW of $0$, it is possible to introduce \mbox{$(n-1)$}~constant control variables with value set as $1$ and using $M_{2n-1}(x_1, x_2, \cdots, x_n, 1, 1, \cdots, 1)$, which gives a result of $0$. The same function will produce output of $1$ if the input HW of $\{x_1, x_2, \cdots, x_n\}$ is $1$. Progressing in the same manner, each HW can be distinguished by modifying the values of control variables.
    \end{proof}

\begin{figure}[ht]
     \centering
     \includegraphics[width=\widefig]{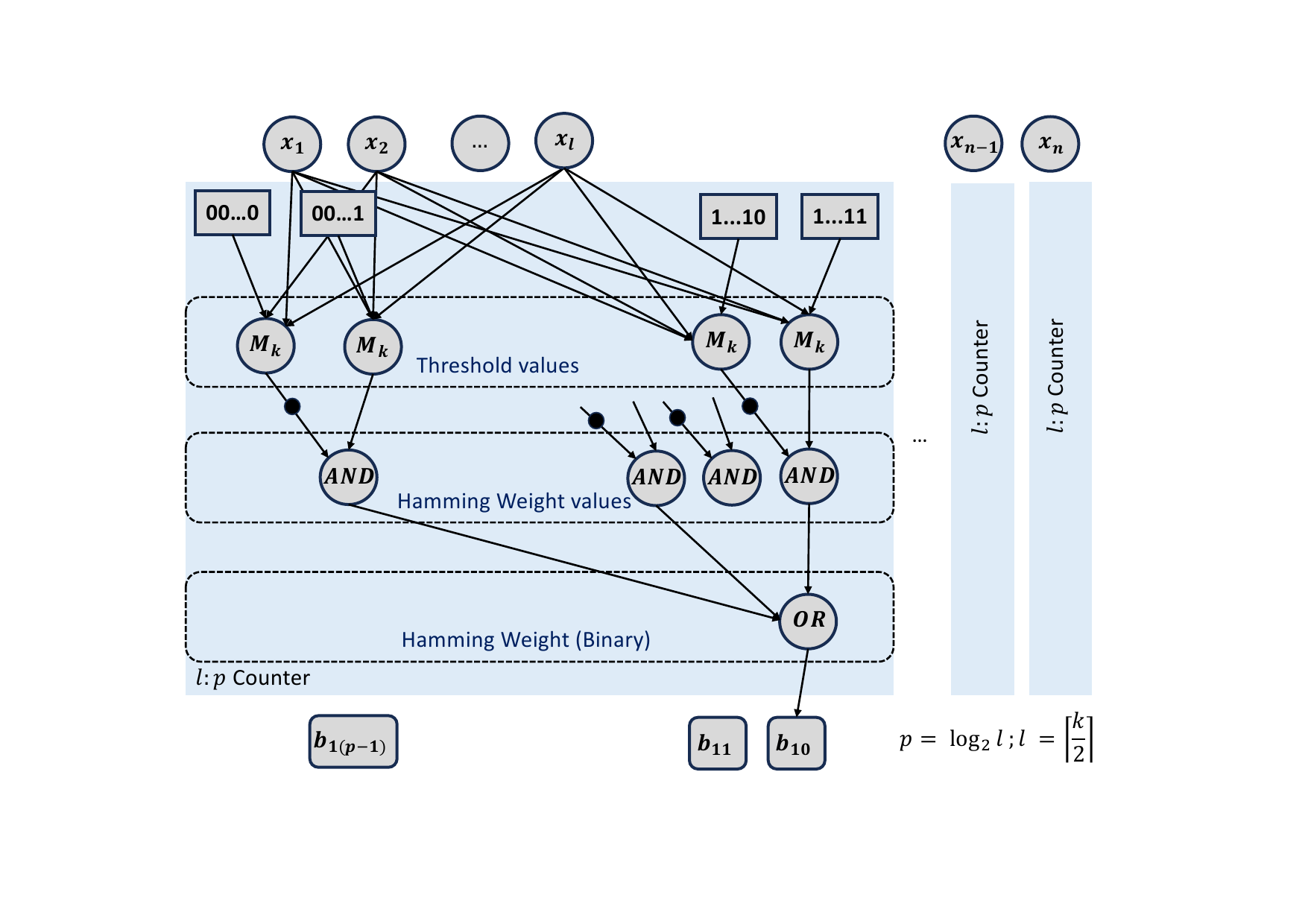}
     \caption{Generic Counter Graph Design for decomposing $M_n$ using $M_k$. AND, OR logic functions can be easily expressed using $M_k$ though by introducing redundant constants.}
     \label{fig:gen_counter}
\end{figure}

Utilizing these results, one can begin the $M_n$ decomposition by first partitioning the inputs in sets of appropriate size~($l = \ceil{\frac{k}{2}}$). Consequently, for each partition, one $(l:p)$ counter (where $p = \log_2(l)$) is required to generate the HW values in binary form, as shown in Figure~\ref{fig:gen_counter}. Inside the counter, $M_k$ functions are utilized, for which the construction could be described in three steps as following.
\begin{itemize}
    \item \textit{Stage I:} Consider all the inputs, and append the constant values to it, for determining a specific threshold value. For example, $$M_k(x_1, x_2, \cdots, x_l, 0, 0, \cdots, 0) = 1$$ indicates all inputs from $x_1$ to $x_l$ are $1$, where $l = \lceil k/2 \rceil$, following corollary~\ref{corollary:counter}. The outputs of this stage indicate $T_l^1$, $T_l^2$ and so on.
    \item \textit{Stage II:} If two consecutive threshold values are AND-ed with the upper one complemented, the resulting outcome is a HW signal. In other words, if a set of inputs is $T_l^1$ and false for $T_l^2$, we can confirm it to have HW 1.
    \item \textit{Stage III:} In this stage, we convert HW signals to a binary HW value. For that, logical OR is performed with multiple possible inputs, e.g., LSB of the HW value is true if the HW signal is originating from a location that is an odd number, e.g., $HW_1$, $HW_3$ and so on.
\end{itemize}

\begin{theorem}\label{thorem:counter_tree}
Following the counter graph decomposition approach, $M_n$ can be realized using $\mathcal{O}(n)$ such many $M_k$ functions.
\end{theorem}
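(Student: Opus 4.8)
The plan is to split the gate count into the two phases of the construction and show each is $\mathcal{O}(n)$. First I fix the partition size to $l = \lceil k/2\rceil$, which Corollary~\ref{corollary:counter} identifies as the largest admissible group size, so the $n$ inputs divide into $\lceil n/l\rceil = \mathcal{O}(n/k)$ groups, each feeding one $(l:p)$ counter with $p = \lceil\log_2 l\rceil$ output bits. Note that Lemma~\ref{lemma:counter-maj} guarantees such a counter is realizable by majority functions of size $2l-1 \le k$ (the choice $l=\lceil k/2\rceil$ is exactly what makes $2l-1\le k$), so every gate is a genuine $M_k$. The heart of Phase~1 is to show each counter costs only $\mathcal{O}(l)$ gates via the three-stage scheme: Stage~I emits the $l$ thresholds $T_l^1,\dots,T_l^l$, each a single $M_k$ by the Proposition (taking $\psi=\lfloor k/2\rfloor$ controls with $\tau=\lceil k/2\rceil-j$ set to $1$); Stage~II forms the indicators $HW_j = T_l^j\wedge\neg T_l^{j+1}$, each a two-input AND hence one $M_k$; Stage~III encodes the indicators into the $p$ output bits.

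The step I would treat most carefully is Stage~III, where each output bit is an OR of up to $l$ indicators yet must still collapse to a single $M_k$. By the Proposition, an $M_k$ with $\psi=k-j$ controls, of which $\tau=\lceil k/2\rceil-1$ are set to $1$, computes $T_j^1$, i.e.\ the OR of $j$ variables; the feasibility condition $\psi\ge\tau$ simplifies to $j\le\lfloor k/2\rfloor+1$, which holds for every $j\le l$. Thus each of the $p=\mathcal{O}(\log l)$ output bits is one $M_k$, making the whole counter $l+l+\mathcal{O}(\log l)=\mathcal{O}(l)$ gates. Multiplying by the $\mathcal{O}(n/l)$ partitions gives $\mathcal{O}(n)$ for Phase~1.

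For Phase~2 I would apply Lemma~\ref{lemma:counter} to the summation of the $m=\lceil n/l\rceil$ Hamming-weight operands, each $b=p=\mathcal{O}(\log l)$ bits wide, plus the fixed threshold word. Instantiating that lemma with $(l:p)$ counters yields at most $b(p+m)/l$ counters, and folding in the $\mathcal{O}(l)$ per-counter cost from Phase~1 gives a Phase~2 gate count of $\mathcal{O}\!\big(b(p+m)\big)=\mathcal{O}\!\big((\log k)^2+(n/k)\log k\big)$, which is $\mathcal{O}(n)$ for $k<n$. Summing the two phases yields the claimed $\mathcal{O}(n)$ bound.

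I expect the principal difficulty to be the bookkeeping rather than any deep inequality: one must certify that each primitive — every threshold in Stage~I, every AND in Stage~II, and especially every wide OR in Stage~III — reduces to exactly one $M_k$ through the control/threshold correspondence, and that the redundant padding constants needed for these AND/OR realizations never push a gate's arity past $k$. A secondary check is that the carry-propagation count of Lemma~\ref{lemma:counter} remains valid once each abstract counter is replaced by its $\mathcal{O}(l)$-gate $M_k$ realization, so that no hidden logarithmic factor creeps into the final sum.
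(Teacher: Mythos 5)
Your proposal is correct and follows essentially the same route as the paper: partition into groups of $l=\lceil k/2\rceil$, realize each $(l:p)$ counter with the three-stage scheme at a cost of $2l+p$ many $M_k$ gates, and invoke Lemma~\ref{lemma:counter} to bound the counters in the multi-operand summation phase, yielding $\mathcal{O}(n)$ overall. The only difference is presentational — you separate the two phases' gate counts and explicitly verify via the Proposition that each Stage~III OR collapses to a single $M_k$, a point the paper asserts without the control-variable bookkeeping.
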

\begin{proof}
We split the proof in two parts. First, we present the number of counters needed, followed by the number of $M_k$ functions needed for each counter.

The $n$ inputs are partitioned with each set holding $l$ inputs, where \mbox{$l = \lceil k/2 \rceil$}. Therefore, the number of required partitions $N_p$ is as follows:
\begin{equation}
N_p = \frac{n}{l} = \frac{n}{\lceil k/2 \rceil} \leq \frac{2n}{k}\\
\end{equation}
It may be noted that one $(l:p)$ counter is needed for each partition to generate the HW, where $p = \log_2\lceil\frac{k}{2}\rceil$. Furthermore, $(l:p)$ counters are needed to add the HW bits from each partition with the threshold value to generate the final $M_n$ bit. This is essentially a multi-operand adder with $(2n/(k + 1))$ operands, each up to $p$-bit wide except the threshold value, which can be $\log_2(n)$-bit wide. Ignoring that specific operand, we obtain the number of $(l:p)$ counters to be at most $p(2n/(k + 1) + p)/l$, following Lemma~\ref{lemma:counter}. Summing these components with the counters needed for each partition, we obtain the total number of $(l:p)$ counters as follows, taking
$l = \frac{(k+1)}{2}$, for the ceiling function.
\begin{equation*}
\begin{split}
& \frac{p(2n/(k+1) + p)}{l} + \frac{2n}{(k+1)} \\
& = \frac{2(2n + kp + p)p}{(k+1)(k+1)} + \frac{2n}{(k+1)} \\
& = \frac{2p(2n + kp + p) + 2n(k+1)}{(k+1)^2}
\end{split}
\end{equation*}

For realizing the $(l:p)$ counter using $M_k$, we follow the three stages depicted in Figure~\ref{fig:gen_counter}. Both stage one and stage two, computing the threshold values and Hamming values, respectively - requires $l$ many $M_k$ functions. The last stage requires $p$ many $M_k$ functions. Altogether, this leads to $(2l+p)$ $M_k$ functions for each $(l:p)$ counter. For simplicity of bound calculation, we put $p = \log_2k$ and ignore lower order terms to obtain the total number of $M_k$ functions needed as following.

\begin{equation}
\begin{split}
& \frac{2(p(2n + kp + p) + n(k+1))(2l + p)}{(k+1)^2} \\
\leq& \frac{2(2n\log_2k + k(\log_2k)^2 + nk)(k + \log_2k)}{(k+1)^2}\label{eq:bound}
\end{split}
\end{equation}

Considering the highest order term, we get the bound on the number of $M_k$ function required to decompose $M_n$ to be $\mathcal{O}(\frac{nk^2}{(k+1)^2})$, i.e., $\mathcal{O}(n)$.
\end{proof}

\begin{corollary}\label{cor:counter_tree_threshold}
Following the counter graph decomposition approach, an $n$-input Threshold Boolean function can be realized using $\mathcal{O}(n)$ such many $M_k$ functions.
\end{corollary}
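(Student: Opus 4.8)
The plan is to reduce the Threshold case to the Majority case already handled in Theorem~\ref{thorem:counter_tree}, by exploiting the fact that the counter graph construction never actually relied on the target being a \emph{balanced} threshold. Recall from Proposition~1 that an $M_n$ with $\psi$ control variables, $\tau$ of which are set to $1$, realizes $T_{n-\psi}^{\lceil n/2\rceil-\tau}$; symmetrically, an arbitrary $n$-input threshold function $T_n^t$ is exactly what the first two phases of the counter graph decomposition compute before the final carry check. The key observation is that in the proof of Theorem~\ref{thorem:counter_tree}, the role of the ``threshold value'' added to the summed Hamming weights was an \emph{arbitrary} constant ($1011$ in the $M_9$ example), chosen so that a carry-out occurs precisely when $HW \ge \lceil n/2\rceil$. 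Nothing in the counter graph, the partitioning, or the stagewise $M_k$ realization of each counter depends on that constant being $\lceil n/2\rceil$ rather than some general $t$.

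Concretely, I would carry out the following steps. First, I would state that given an arbitrary $T_n^t$, we partition the $n$ free inputs into $N_p = \lceil n/l\rceil$ groups of size $l = \lceil k/2\rceil$ exactly as before (Corollary~\ref{corollary:counter}), and compute each group's Hamming weight using one $(l:p)$ counter, with $p = \log_2\lceil k/2\rceil$. Second, I would sum these $N_p$ binary Hamming-weight words together with a fixed constant word encoding $2^{\lceil\log_2 n\rceil} - t$ (or, equivalently, check the carry-out against the threshold $t$), so that the final overflow bit equals $1$ exactly when $HW(x_1,\dots,x_n) \ge t$; this is the only place the threshold enters, and it changes only the value of an additive constant, not the structure or number of counters. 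Third, I would invoke Lemma~\ref{lemma:counter} to bound the number of $(l:p)$ counters in this multi-operand addition and Lemma~\ref{lemma:counter-maj} together with the three-stage Stage~I--III realization to bound the $M_k$ functions per counter, giving the same $(2l+p)$ count per counter as in the Majority proof.

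Combining these, the total $M_k$ count is governed by the identical expression as in Equation~\eqref{eq:bound}, since the constant threshold word contributes at most one extra operand of width $\lceil\log_2 n\rceil$, which is a lower-order term already absorbed in the analysis of Theorem~\ref{thorem:counter_tree}. Extracting the highest-order term again yields $\mathcal{O}\!\left(\frac{nk^2}{(k+1)^2}\right) = \mathcal{O}(n)$.

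I do not anticipate a genuine obstacle here, since the result is essentially a corollary in the literal sense: the Majority proof already handles a general additive threshold constant internally. The one point requiring a line of care is confirming that replacing the balanced threshold $\lceil n/2\rceil$ by an arbitrary $t \le n$ does not increase the \emph{width} of the threshold operand beyond $\lceil\log_2 n\rceil$ bits and hence does not inflate the operand count in the multi-operand adder; this follows immediately because any valid threshold satisfies $0 \le t \le n$, so its binary encoding fits in $\lceil\log_2 n\rceil$ bits, exactly as assumed when that operand was ignored as lower-order in the proof of Theorem~\ref{thorem:counter_tree}.
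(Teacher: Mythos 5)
Your proposal is correct and takes essentially the same approach as the paper: the paper's proof is a one-line observation that only the threshold constant added in the intermediate counter graph steps changes, which is precisely the key point you identify and then elaborate. Your additional care about the threshold operand's width staying within $\lceil\log_2 n\rceil$ bits is a valid (and slightly more explicit) justification of the same claim.
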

\begin{proof}
The result directly follows from theorem~\ref{thorem:counter_tree}, where only the threshold value to be added in the intermediate counter graph composition steps is different.
\end{proof}


Note that, in practice, multiple optimizations can be applied to reduce the number of $M_k$ functions further.

\section{Decomposition Procedure: Partition Function Approach}\label{sec:partition_function}
\noindent The decomposition using partition function approach proceeds in three phases. \textit{First}, the input bits are partitioned into multiple groups of input count $l$, where $l = \lceil\frac{k}{2}\rceil$. For each group, $l$ output bits are produced indicating the HW of the input $1..l$, i.e., exactly one output bit is set to True. \textit{Second}, the same HW bits from all the groups are processed to identify if at least $t$ HW bits are true. \textit{Finally}, the restricted set partition function is invoked to combine the threshold stage outputs. This is exemplified through a decomposition of $M_9$ to $M_5$ in the following Figure~\ref{fig:partition_function_9_5}.

\begin{figure}[hbt]
     \centering
     \includegraphics[width=\linewidth]{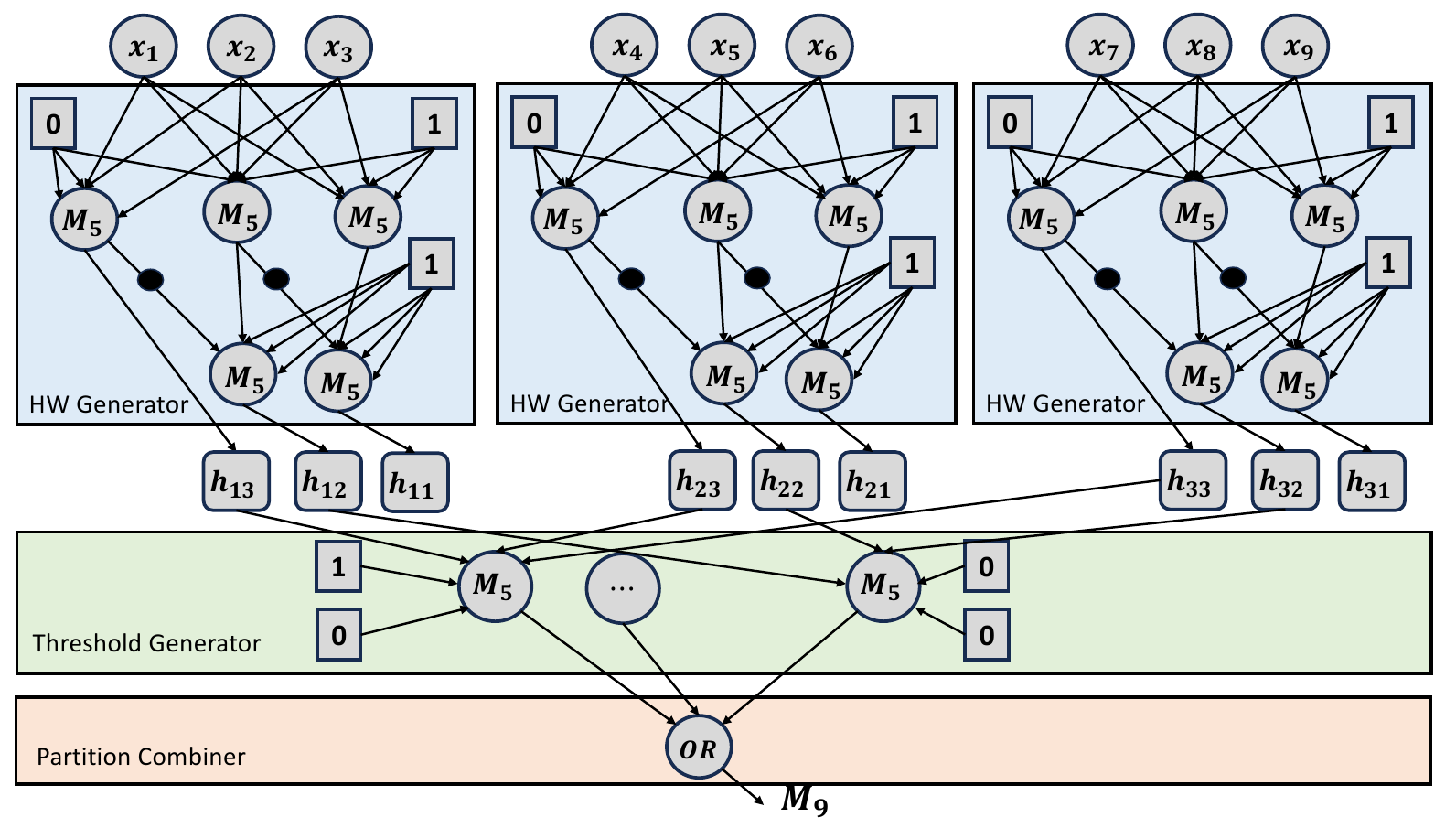}
     \caption{$M_9$ computation using $M_5$ via Partition Function Approach.}
     \label{fig:partition_function_9_5}
\end{figure}

In the final stage, we need to identify the partitions of $5$, where the maximum value of a constituent number is $3$ (corresponding to the HW of a group) and there can be at most $3$ elements in the partition (corresponding to $3$ groups). Hence, we need to determine the partitions corresponding to $p_r(3, 3, 5)$, which are, $\{3+2\}$ and $\{2+2+1\}$. However, since we are also interested in achieving any possible overall \mbox{HW $\ge 5$}, we can identify the partitions for $p(3, 3, 6)$ to obtain $\{3+3\}$, $\{2+2+2\}$ and $\{3+2+1\}$. In this case, one may note that since, $\{3+2+1\}$ (as one of the partitions of $6$) includes $\{3+2\}$ (as one of the partitions of $5$), it is redundant to consider $\{3+2+1\}$ in the last phase. Therefore, we define a new function that is larger in scope compared to restricted partition function. We denote such function as \textit{Restricted Set Partition Function}, indicating it covers a set of numbers to partitioned instead of a single number. Formally, it is defined as follows.

\begin{definition}
Restricted Set Partition Function $p_{rs}(N, M, n)$ is the total number of partitions of all numbers $z$ (where $n \le z \le N\times M$) using at most $M$ parts, where each part is at most $N$.
\end{definition}

In what follows, we attempt to obtain the bound of compositional complexity using the partition function approach. Let us first derive some preliminary results necessary for that.

\begin{lemma}\label{lemma:rsp-bound}
Considering $N\times M = (2n-1)$, Restricted Set Partition Function $p_{rs}(N, M, n)$ is upper bounded by $\frac{1}{8\sqrt{3}}e^{\pi\sqrt{\frac{4n}{3}}}$.
\end{lemma}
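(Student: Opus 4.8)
The plan is to bound the Restricted Set Partition Function $p_{rs}(N, M, n)$ by relating it to the ordinary partition function $p(\cdot)$, whose Hardy--Ramanujan asymptotic~\eqref{eqn:ramanujan} is already available to us. First I would observe that $p_{rs}(N, M, n)$ counts all partitions of every integer $z$ in the range $n \le z \le N \times M$ subject to the two constraints: at most $M$ parts, each part at most $N$. The crucial simplification comes from the hypothesis $N \times M = 2n-1$, which fixes the top of the summation range to $2n-1$ and ties the two restriction parameters together. My first step is therefore to write
\begin{equation*}
p_{rs}(N, M, n) = \sum_{z=n}^{2n-1} p_r(N, M, z),
\end{equation*}
and then bound each restricted count $p_r(N, M, z)$ from above by the unrestricted count $p(z)$, since dropping the ``at most $M$ parts'' and ``each part at most $N$'' constraints can only increase the number of admissible partitions.

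Next I would bound the whole sum by the number of terms times the largest summand. There are exactly $n$ integers in the range $[n, 2n-1]$, and since $p(z)$ is monotincreasing, every term is at most $p(2n-1) \le p(2n)$. This yields
\begin{equation*}
p_{rs}(N, M, n) \le n \cdot p(2n).
\end{equation*}
Substituting the Hardy--Ramanujan approximation $p(2n) \simeq \frac{1}{4(2n)\sqrt{3}}\,e^{\pi\sqrt{\frac{2(2n)}{3}}} = \frac{1}{8n\sqrt{3}}\,e^{\pi\sqrt{\frac{4n}{3}}}$ and multiplying by the factor $n$ out front, the $n$ in the numerator cancels the $n$ in the denominator, leaving precisely
\begin{equation*}
p_{rs}(N, M, n) \le \frac{1}{8\sqrt{3}}\,e^{\pi\sqrt{\frac{4n}{3}}},
\end{equation*}
which is the claimed bound. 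The cancellation of the linear factor is exactly why the constraint $N \times M = 2n-1$ (making the range length $n$) produces such a clean constant $\frac{1}{8\sqrt{3}}$.

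The main obstacle I anticipate is rigor around the use of the \emph{approximation} in~\eqref{eqn:ramanujan}: the Hardy--Ramanujan formula is an asymptotic equivalence ($n \to \infty$), not a hard inequality for finite $n$, so strictly speaking one should either invoke a known genuine upper bound for $p(z)$ of the same exponential shape (for instance $p(z) < e^{\pi\sqrt{2z/3}}$, which holds for all $z \ge 1$) or state the result as an asymptotic bound and absorb lower-order discrepancies. I would handle this by appealing to an explicit upper bound on $p(z)$ rather than the asymptotic, so that the inequality holds for all $n$ rather than only in the limit; the exponential term $e^{\pi\sqrt{4n/3}}$ dominates any polynomial correction, so the stated constant is safe up to the treatment of these lower-order terms. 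A secondary, more minor point to verify is the monotonicity and the exact endpoint $2n-1$ versus $2n$ in the exponent, but since replacing $2n-1$ by $2n$ only loosens the bound, this causes no difficulty.
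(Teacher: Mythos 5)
Your proposal matches the paper's proof essentially exactly: both bound $p_{rs}(N,M,n)$ by the $n$ terms of the sum times the unrestricted partition function at the top of the range, then apply the Hardy--Ramanujan approximation so that the linear prefactor cancels. If anything, your version is slightly tidier --- using $p(2n)$ rather than $p(2n-1)$ makes the constant $\frac{1}{8\sqrt{3}}$ come out cleanly, and you correctly flag that the asymptotic in~\eqref{eqn:ramanujan} should really be replaced by a genuine upper bound on $p(z)$, a point the paper glosses over.
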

\begin{proof}
Note that, we apply the constraint of $N\times M = (2n-1)$ as it applies in our case. Taking the partition function without any restriction, the maximum number could be $n$ times the bound outlined in equation~\eqref{eqn:ramanujan}. Hence, we obtain,
\begin{equation*}
\begin{split}
& n\frac{1}{4(2n-1))\sqrt{3}}e^{\pi\sqrt{\frac{2(2n-1))}{3}}} \\
& \le \frac{1}{8\sqrt{3}}e^{\pi\sqrt{\frac{4n}{3}}} \\
\end{split}
\end{equation*}

\end{proof}

When trying to construct $n$-input OR gate using $M_k$, it suffices to have $1$ $M_k$ gate, if $n \leq \frac{k}{2}$. Otherwise, we have the following result.

\begin{lemma}\label{lemma:or-gate}
To compute an $n$-input OR gate, using $M_k$ (where $n > \frac{k}{2}$), the number of required $M_k$ gates is given by $\frac{2(n-1)}{k}$.
\end{lemma}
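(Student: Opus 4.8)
The plan is to reduce the OR to a threshold query and then pipeline single-gate OR blocks in a tree. Observe first that an $n$-input OR is exactly the threshold function $T_n^{1}$, since the output is true iff the Hamming weight is at least $1$. So the initial step is to determine how many free inputs a single $M_k$ can OR. Using the earlier Proposition relating control variables to threshold functions, I would take a gate $M_k$ with $r$ free variables and $\psi = k-r$ control variables, fixing $\tau = \lceil k/2\rceil - 1$ of them to $1$ and the remaining ones to $0$; by that Proposition the gate then realizes $T_r^{\lceil k/2\rceil - \tau} = T_r^{1}$, i.e.\ an $r$-input OR. The feasibility requirement $\tau \le \psi$ forces $r \le \lfloor k/2\rfloor + 1 = \lceil k/2\rceil$ (recall $k$ is odd), which recovers the single-gate capacity already implicit in Corollary~\ref{corollary:counter} and confirms the stated $n \le k/2$ base case.

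Second, I would assemble these blocks into a reduction tree (equivalently a left-leaning chain). Each block consumes up to $\lceil k/2\rceil$ live signals and emits one signal equal to the OR of its inputs; hence every block decreases the number of outstanding signals by $\lceil k/2\rceil - 1$. Starting from $n$ inputs and reducing to a single output therefore requires $\lceil (n-1)/(\lceil k/2\rceil - 1)\rceil$ blocks, which is the standard count for an $r$-ary merge having net reduction $r-1$ per node. Correctness of the tree follows immediately from associativity of OR, so nothing beyond the per-gate claim needs separate justification.

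Finally, I would simplify the count. Since $k$ is odd, $\lceil k/2\rceil - 1 = (k-1)/2$, so the number of gates is $\lceil 2(n-1)/(k-1)\rceil$, which to leading order matches the claimed $\tfrac{2(n-1)}{k}$. The main obstacle I anticipate is purely bookkeeping: reconciling the ceiling and the $k-1$ versus $k$ in the denominator with the clean expression $\tfrac{2(n-1)}{k}$, i.e.\ arguing that the gap is a lower-order rounding term and that both quantities are $\Theta(n/k)$. I would treat the per-gate capacity derivation as the load-bearing step and regard the arithmetic of the tree count as the routine remainder.
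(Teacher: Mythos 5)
Your proposal is correct and follows essentially the same route as the paper: a single $M_k$ with suitably fixed control inputs realizes an $\lceil k/2\rceil$-input OR, and a reduction tree of such blocks yields $(n-1)/(\lceil k/2\rceil-1)$ gates. You are in fact slightly more careful than the paper at the final step --- the paper silently replaces $\lceil k/2\rceil-1=(k-1)/2$ by $k/2$ to land on the clean $\tfrac{2(n-1)}{k}$, whereas you flag this as a lower-order rounding discrepancy, which is the honest reading of the stated bound.
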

\begin{proof}
The largest OR gate that can be computed using one $M_k$ contains $l$ inputs, where $l = \lceil\frac{k}{2}\rceil$. One can construct a logarithmic depth tree for computing $n$-input OR using $l$-input OR gates, where, the depth of the tree is $\lceil \log_ln \rceil$. The total number of $l$-input OR gates in that structure is given by $$\frac{n-1}{l-1} = \frac{(n-1)}{\lceil\frac{k}{2}\rceil - 1} = \frac{(n-1)}{\frac{k}{2}} = \frac{2(n-1)}{k}$$
\end{proof}

\begin{corollary}\label{corollary:and-gate}
To compute an $n$-input AND gate, using $M_k$ (where $n > \frac{k}{2}$), the number of required $M_k$ gates is given by $\frac{2(n-1)}{k}$.
\end{corollary}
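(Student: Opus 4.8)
The plan is to derive Corollary~\ref{corollary:and-gate} from Lemma~\ref{lemma:or-gate} by invoking De Morgan's law, exploiting the fact that the Majority function is self-dual and therefore closed under simultaneous complementation of all inputs and outputs. First I would observe that an $n$-input AND can be written as $\mathrm{AND}(x_1,\ldots,x_n) = \neg\,\mathrm{OR}(\neg x_1,\ldots,\neg x_n)$, so that any circuit computing an $n$-input OR can be turned into one computing an $n$-input AND simply by complementing all $n$ primary inputs and complementing the single primary output. Crucially, this transformation introduces no additional $M_k$ gates: it only adds inversions on edges, which in the MIG / Majority-Inverter Graph model are free (they are recorded as dots on edges, as noted in the caption of Figure~\ref{fig:counter}).

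The key steps, in order, are as follows. I would take the $l$-input OR tree of depth $\lceil\log_l n\rceil$ constructed in the proof of Lemma~\ref{lemma:or-gate}, where $l = \lceil k/2\rceil$ and the tree uses exactly $\frac{2(n-1)}{k}$ many $l$-input OR gates. Next I would note that each $l$-input OR gate in that tree is itself realized by a single $M_k$ gate (with the remaining $\lfloor k/2\rfloor$ inputs fixed as control constants so that the gate outputs $1$ whenever at least one free input is $1$). Applying De Morgan's law node-by-node converts every OR gate into an AND gate; equivalently, pushing the complements through the tree from the inputs to the root turns the whole OR tree into an AND tree with the same topology. Since complementation is absorbed into the edges and each internal $l$-input AND is again a single $M_k$ gate (now with its control constants set so that the output is $1$ only when all free inputs are $1$), the total count of $M_k$ gates is unchanged at $\frac{2(n-1)}{k}$.

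I do not expect a genuine obstacle here, since the result is essentially a dual of the preceding lemma; the only thing requiring care is the bookkeeping of control constants, namely verifying that an $l$-input AND is realizable by one $M_k$ gate in exactly the same way an $l$-input OR is. Concretely, a $k$-input Majority with $\lceil k/2\rceil$ free variables and $\lfloor k/2\rfloor$ control inputs realizes $T_{\lceil k/2\rceil}^{\,t}$ for a threshold $t$ determined by how many controls are fixed at $1$ (by the Proposition); choosing $t=1$ yields OR and $t = \lceil k/2\rceil = l$ yields AND. Thus a single $M_k$ suffices for each $l$-input AND gate just as for each OR gate, and the logarithmic-depth tree argument transfers verbatim, giving the claimed bound $\frac{2(n-1)}{k}$.
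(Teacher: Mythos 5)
Your proposal is correct and matches the paper's intent: the paper states this as an unproved corollary of Lemma~\ref{lemma:or-gate}, relying implicitly on exactly the De Morgan / self-duality argument you spell out (free inversions on MIG edges, same tree topology, each $l$-input AND realized by one $M_k$ with controls set to $0$ so that $t=l$). Your write-up simply makes explicit the duality step the paper leaves to the reader.
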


\begin{figure}[hbt]
     \centering
     \includegraphics[width=0.8\linewidth]{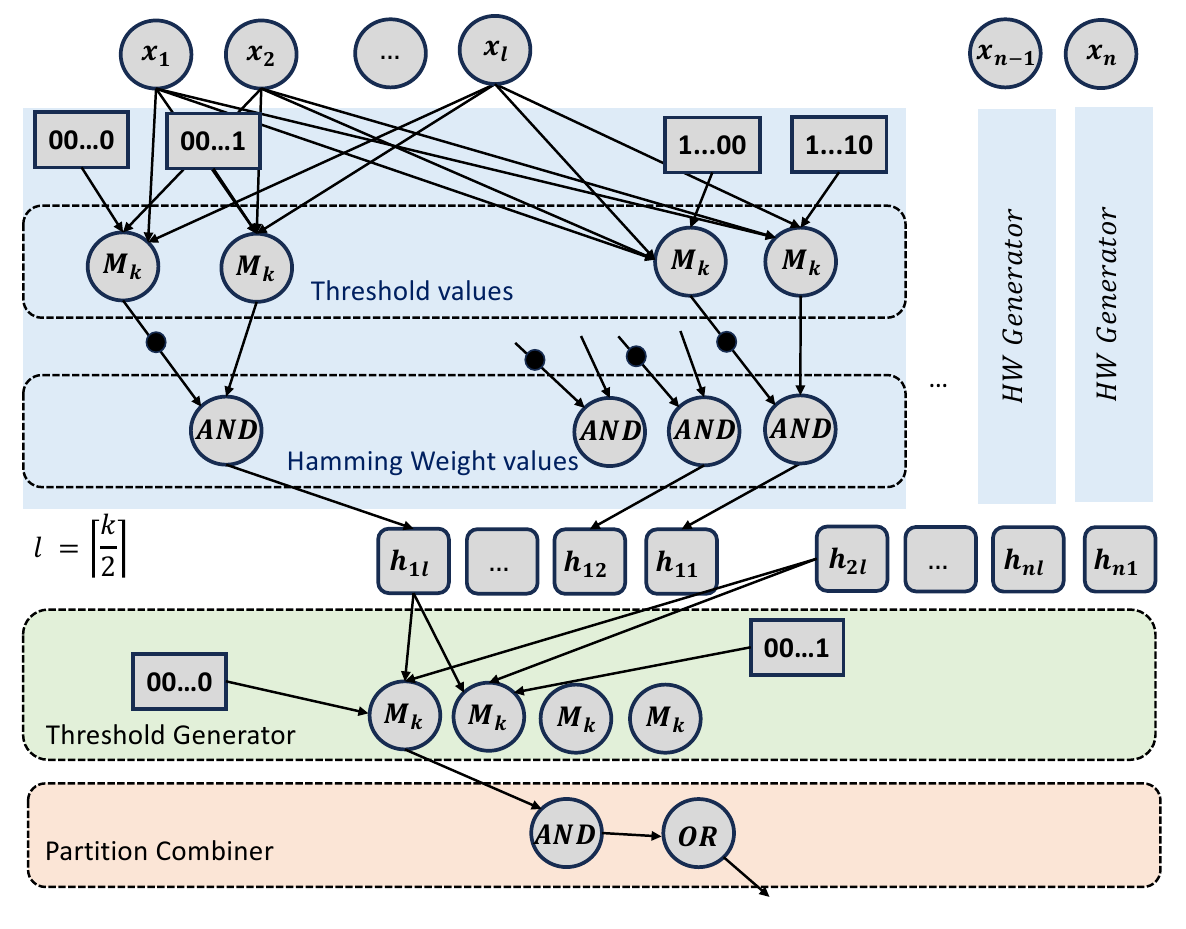}
     \caption{$M_n$ computation using $M_k$ via Partition Function Approach.}
     \label{fig:partition_function_n_k}
\end{figure}

The generalized decomposition flow using partition function approach is depicted in the Figure~\ref{fig:partition_function_n_k}.

\begin{theorem}\label{thorem:partition_function}
Following the partition function decomposition approach, $M_n$ can be realized using $\mathcal{O}(\frac{n}{k^2}e^{\sqrt{n}})$ $M_k$ functions.
\end{theorem}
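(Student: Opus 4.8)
The plan is to count the number of $M_k$ functions used across the three phases of the partition function construction (Figure~\ref{fig:partition_function_n_k}), sum them, and show the dominant contribution comes from the final restricted-set-partition stage, whose cost is governed by Lemma~\ref{lemma:rsp-bound}. As in Theorem~\ref{thorem:counter_tree}, the inputs are split into $N_p = n/l \le 2n/k$ groups of size $l = \lceil k/2\rceil$, so the parameters feeding the partition function are $M = N_p = \Theta(n/k)$ parts, each part at most $N = l = \Theta(k)$, giving $N \times M = \Theta(n)$, consistent with the $N\times M = 2n-1$ normalization in Lemma~\ref{lemma:rsp-bound}.

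\textbf{Phase I (HW generation per group).} For each of the $N_p$ groups I would account for the $M_k$ functions needed to produce the $l$ HW-indicator bits, exactly the Stage~I/Stage~II mechanism reused from the counter-graph construction. This costs $\Theta(l)$ functions per group, hence $\Theta(N_p \cdot l) = \Theta(n)$ in total. \textbf{Phase II (threshold across groups).} Here the identical HW bits from all $N_p$ groups are combined to test whether at least $t$ of them fire; this is built from OR/AND trees, so by Lemma~\ref{lemma:or-gate} and Corollary~\ref{corollary:and-gate} each such test costs $\Theta(N_p/k) = \Theta(n/k^2)$ functions, and there are $O(l) = O(k)$ thresholds to check, contributing $\Theta(n/k)$ overall. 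Both of these phases are lower-order and will be absorbed.

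\textbf{Phase III (restricted set partition combination)} is where the exponential factor enters and is the crux of the bound. The number of partition patterns that must be realized is exactly $p_{rs}(N, M, n)$ with $N \times M = \Theta(n)$, which by Lemma~\ref{lemma:rsp-bound} is bounded by $\tfrac{1}{8\sqrt{3}}e^{\pi\sqrt{4n/3}} = O(e^{\sqrt{n}})$. Each such partition pattern is matched by combining the Phase~II threshold outputs through an AND of up to $M = \Theta(n/k)$ signals followed by an OR over all valid patterns; invoking Corollary~\ref{corollary:and-gate} once more, realizing each pattern's AND costs $\Theta(M/k) = \Theta(n/k^2)$ functions. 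Multiplying the per-pattern cost by the number of patterns yields
\begin{equation*}
O\!\left(\frac{n}{k^2}\right) \cdot O\!\left(e^{\sqrt{n}}\right) = O\!\left(\frac{n}{k^2}\,e^{\sqrt{n}}\right),
\end{equation*}
and the final OR over the $O(e^{\sqrt n})$ patterns adds only $O(\tfrac{1}{k}e^{\sqrt n})$, a lower-order term. Summing all three phases, the $e^{\sqrt{n}}$ term dominates and the stated bound follows.

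\textbf{The main obstacle} I anticipate is the bookkeeping in Phase~III: justifying that each distinct partition in $p_{rs}(N,M,n)$ corresponds to exactly one AND-gate of width $O(n/k)$ (so that the per-pattern cost is genuinely $\Theta(n/k^2)$ rather than something larger), and confirming that the $\pi\sqrt{4n/3}$ exponent from Lemma~\ref{lemma:rsp-bound} is correctly swallowed into the $e^{\sqrt n}$ notation with the $1/k^2$ prefactor surviving from the AND-tree cost. The substantive content is ensuring the normalization $N\times M = 2n-1$ assumed in Lemma~\ref{lemma:rsp-bound} is compatible with the construction's actual parameters $N=\lceil k/2\rceil$, $M=2n/k$; since their product is $\Theta(n)$, the lemma applies up to constants absorbed by the asymptotic notation.
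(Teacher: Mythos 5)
Your proposal follows essentially the same route as the paper: split the inputs into $2n/k$ groups of size $\lceil k/2\rceil$, generate per-group Hamming-weight indicators, combine them across groups via threshold tests, and let the restricted set partition count from Lemma~\ref{lemma:rsp-bound} multiplied by the per-pattern AND cost $O(n/k^2)$ from Corollary~\ref{corollary:and-gate} supply the dominant $O(\frac{n}{k^2}e^{\sqrt{n}})$ term, with the final OR being lower order. The only divergence is Phase~II, where the paper invokes Corollary~\ref{cor:counter_tree_threshold} to realize the cross-group ``at least $t$'' tests and obtains $O(n^2/k)$ rather than your $\Theta(n/k)$ (a general threshold test is not merely an OR/AND tree, so Lemma~\ref{lemma:or-gate} alone does not cover it); since both estimates are dominated by the exponential term, your final bound is unaffected.
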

\begin{proof}
Similar to the prior decomposition, the $n$ inputs are partitioned with each set holding $l$ inputs, where $l = \lceil k/2 \rceil$. Hence, the number of required partitions $N_p \leq \frac{2n}{k}$. In the first two stages of the decomposition, the HW values are calculated. For that, in each partition, $2l$ $M_k$ functions are needed.

In the next phase, the HW values from individual sets are combined to obtain corresponding threshold values, i.e., one needs to determine if there are at least $t$ HW of $1, 2, \cdots, \frac{2n}{k}$, for values of $t$ from $1$ to $l$. Considering $\frac{2n}{k} > k$, here, we can invoke Corollary~\ref{cor:counter_tree_threshold} to obtain total number of $M_k$ functions to be of the following order.

\begin{equation*}
\begin{split}
& l \times \frac{n}{l} \times \mathcal{O}(\frac{2n}{k}) = n \times \mathcal{O}(\frac{2n}{k}) = \mathcal{O}(\frac{2n^2}{k})
\end{split}
\end{equation*}

The threshold outputs are then AND-ed at the last phase of partition combiner. For a single \texttt{and} gate, at most $\frac{2n}{k}$ inputs are needed, one from each set. Following Corollary~\ref{corollary:and-gate}, every \texttt{and} gate requires $\frac{2(2n/k-1)}{k}$ $M_k$ gates. The total number of such \texttt{and} gates, as well as the number of inputs to the final \texttt{or} gate is determined by the restricted set partition function, which is upper bounded by $\mathcal{O}(e^{\sqrt{n}})$. Since this dictates the number of required \texttt{and} gates, the highest order term, across all phases of the decomposition is given as $\mathcal{O}(\frac{n}{k^2}e^{\sqrt{n}})$.
\end{proof}

It can be noted that the decomposition using partition function invokes the decomposition using counter graph approach within it. This could be avoided though the final bound order remains the same. This result could also be accompanied with a corresponding corollary for threshold function realization using partition function approach. Since, this decomposition turns out to be much less efficient compared to the counter graph decomposition, we refrained from detailed experimental evaluation of this.

\section{Experimental Studies}\label{sec:exp}
\noindent We will explain the implementation flow and the experimental results in this section.

\begin{figure}[hbt]
  \begin{subfigure}[t]{.5\columnwidth}
    \centering
    \caption{}
    \label{fig:mlir_flow}
    \includegraphics[width=3.5cm]{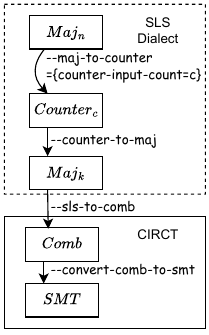}
  \end{subfigure}
  \begin{subfigure}[t]{.4\columnwidth}
  \caption{}
    \label{fig:sls_dialect}
    {\scriptsize
    \begin{tabular}{rcc}
    \bottomrule
    \textbf{Operation} & \textbf{\#Inputs} & \textbf{\#Outputs} \\
    \hline
    \texttt{not} & single & single \\
    \hline
    \texttt{and} & multi & single \\
    \hline
    \texttt{or} & multi & single \\
    \hline

    \texttt{maj} & multi (odd) & single \\
    \hline
    \texttt{counter} & multi & multi \\
    \toprule
    \end{tabular}
    }
  \end{subfigure}

  \caption{(\subref{fig:mlir_flow}) MLIR-based flow for Majority Decomposition. The SLS dialect has been implemented and integrated with CIRCT, along with the passes {\footnotesize\{\texttt{--maj-to-counter, --counter-to-maj, --sls-to-comb}\}}. (\subref{fig:sls_dialect}) The operations defined in SLS dialect.}
\end{figure}

\subsection{Implementation Flow}\label{flow}
\noindent In this section, we present the evaluation of our Boolean decomposition approach implemented using custom operators and transformation passes in CIRCT~\cite{circt}, an MLIR~\cite{lattner2021mlir}-based framework for hardware compilation.
To clarify the terminology for readers unfamiliar with MLIR, an {\em operator} in this context refers to a fundamental operation or computation defined within the MLIR framework, which can be customized to represent specific functionalities, such as Boolean logic operations. A {\em transformation pass} is a modular component in MLIR that applies optimizations or transformations to the intermediate representation~(IR) of a program. These passes enable the restructuring or simplification of the IR, such as decomposing complex Boolean expressions into simpler, more efficient forms. Together, custom operators and transformation passes form the backbone of our implementation, allowing us to achieve the desired Boolean decomposition efficiently within the CIRCT ecosystem.

The overall implementation flow is shown in Figure~\ref{fig:mlir_flow}.
We propose a new dialect, namely Structured Logic Synthesis~(SLS) dialect in MLIR, that defines the basic Boolean operators for logic synthesis in the context of this paper, namely \texttt{not}, \texttt{and}, \texttt{or},
$\texttt{counter}$ and majority~(\texttt{maj}). \texttt{not} has a single input operand, \texttt{maj} has odd number of inputs, whereas the rest of the operators support two or more inputs.  \texttt{counter} has multiple outputs, while the rest of them have a single output. The operations defined in the dialect is summarily listed in Figure~\ref{fig:sls_dialect}.

Using the proposed flow, the input $n$-input majority is lowered (\texttt{--maj-to\newline-counter}) into $c$-input counter operators, where $k= 2c - 1$. In the following pass, each counter is lowered~(\texttt{--counter-to-maj}) into a combination of operators, as shown in the example of Figure~\ref{fig:gen_counter}. Fig~\ref{fig:decomposition} shows a detailed step-by-step demonstration of decomposition of $Maj_{11}$ into $Maj_{9}$ using the proposed flow.
All experiments were conducted on a machine equipped with an Apple M1 Ultra system-on-chip (SoC), featuring a total of 20 CPU cores. The system is configured with 128 GB of unified memory, ensuring sufficient capacity for computationally intensive tasks. The operating system used is macOS~\texttt{15.2}~(build 24C101), running on a Darwin kernel version~\texttt{24.2.0}. The machine architecture is \texttt{arm64}. The implementation was done on top of the  CIRCT version~\texttt{da2ca8c}. All the decompositions took between ${0.046s-0.554s}$
with a standard deviation of $0.0866$. Furthermore, the output of decomposition was verified formally using the \texttt{circt-lec} tool~\cite{circtformal}, that uses the z3 SMT library~\cite{Z3Prover}.
The number of majority nodes in the final decomposition can be improved further using algorithmic techniques such as output-based pruning, majority rewriting and other techniques proposed in literature~\cite{mathias_xor_maj,decomp_tcad}.

\subsection{Results}
\begin{figure}[ht]
    \centering
    \includegraphics[width=\widefig]{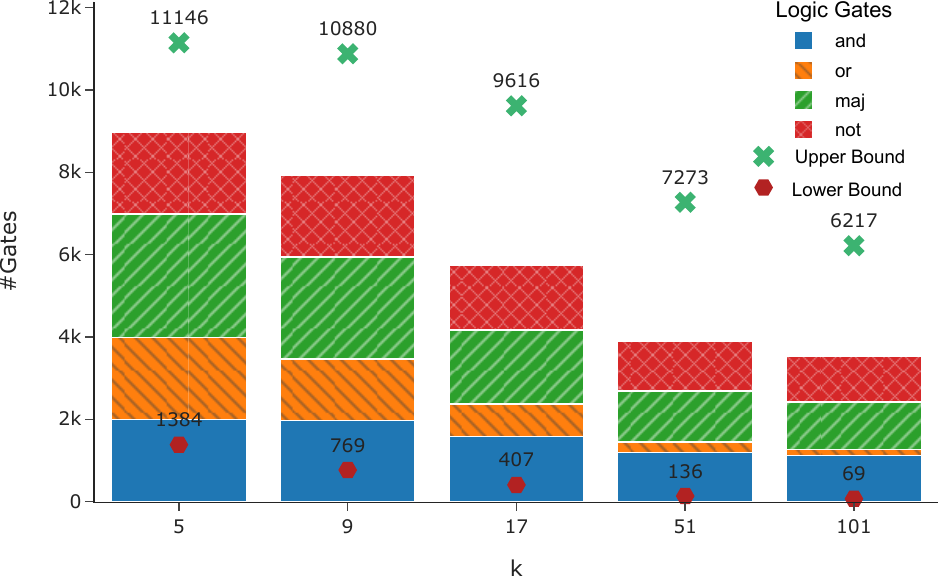}
    \caption{Decomposition of $M_{1001}$ into various $M_k$. As per the counter graph decomposition procedure, the \texttt{or} and \texttt{and} gates obtained during decomposition,  can be expressed in terms of majority~(\texttt{maj}) using the Lemma~\ref{lemma:or-gate} and Corollary~\ref{corollary:and-gate} respectively. The theoretical upper bound~\eqref{eq:bound} is marked with green~$\times$ and the optimal lower bound $\Omega(\frac{n}{k}\log k)$ is marked with a red dot. }
    \label{fig:majnk_allk}
\end{figure}

We present analysis of decomposition of Majority-1001~($n=1001$) into a variety of $M_k$ targets in Figure~\ref{fig:majnk_allk}. With the increase in $k$, the total number of gates reduce in the decomposition, as expected. Similarly, Figure~\ref{fig:majnk_alln} shows the impact of decomposition of various $M_n$ into $M_9$($k=9$) gates. As can be observed in both the figures, the decomposition procedure involves \texttt{and}, \texttt{or} and \texttt{not} gates. Considering the decomposition to be realized in terms of MIG, the \texttt{or} and \texttt{and} gates could be expressed as \texttt{majority} function  using the Lemma~\ref{lemma:or-gate} and Corollary~\ref{corollary:and-gate} respectively, resulting in a homogeneous decomposition.

\begin{figure}[ht]
    \centering
    \includegraphics[width=\widefig]{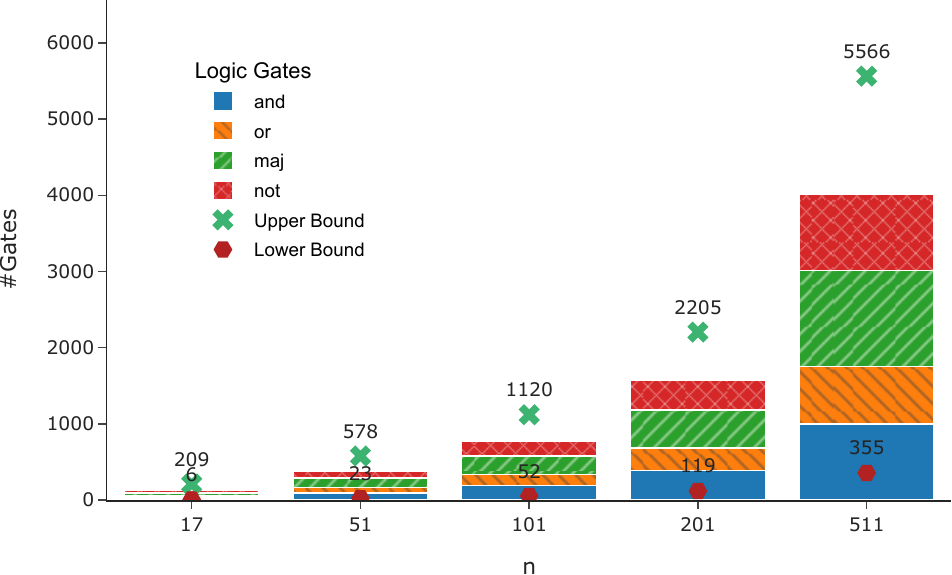}
\caption{Decomposition of $M_{n}$ into  $M_9$. The theoretical upper bound Eq~\eqref{eq:bound} is marked with green~$\times$ and the optimal lower bound $\Omega(\frac{n}{k}\log k)$ is marked with a red dot.}
    \label{fig:majnk_alln}
\end{figure}

Figure~\ref{fig:majnk} presents the total number of gates required in decomposition, for all odd values of $n$ between 5 and 511 and various values of $k$. Do note that when $n < k$, a single majority gate~$M_k$ is used with constants~($\frac{n-k}{2}$ inputs are set to 1 and $\frac{n-k}{2}$ are set to 0) to realize $M_n$.

\begin{figure}[ht]
    \centering
    \includegraphics[width=\widefig]{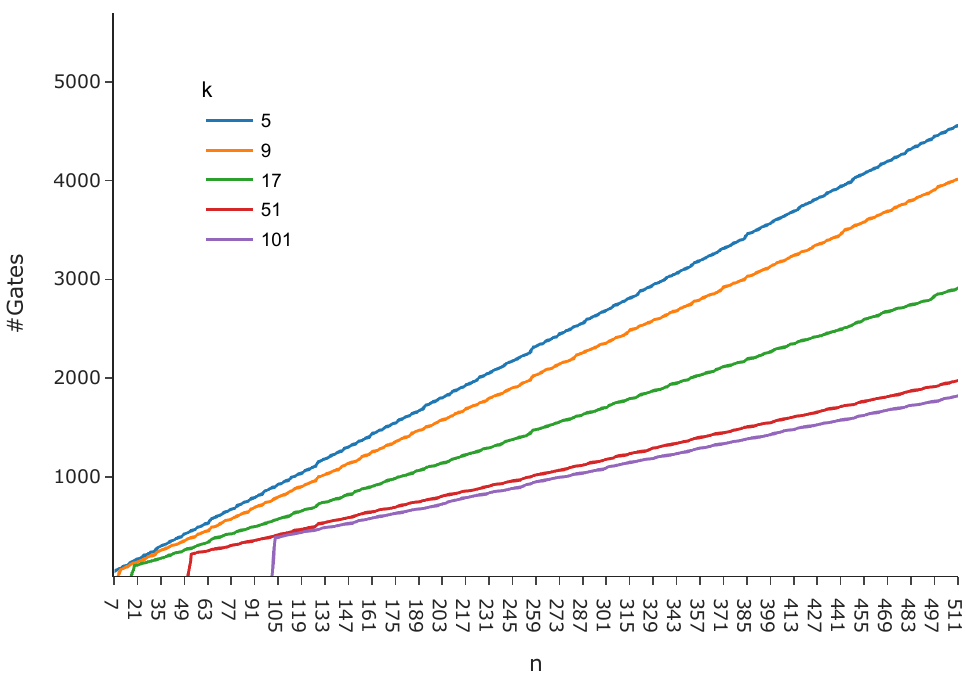}
    \caption{Decomposition of $M_{n}$ into various $M_k$, where \mbox{($5 \le n \le 511)$} and \mbox{$k=\{5,9,17,51,101\}$}.}
    \label{fig:majnk}
\end{figure}



\begin{figure}[hbt]
    \centering
\begin{subfigure}[t]{\columnwidth}
\centering
\caption{$Maj_{11}$ expressed in SLS dialect, in a hardware module of CIRCT.}
    \label{fig:maj11}
    {\scriptsize

    \begin{verbatim}
module {
    hw.module  @test(in %arg0: i1, in %arg1: i1,..., in %arg10: i1, out o1: i1) {
        %o1 = sls.maj(%arg0, %arg1,...,%arg10:
            i1, i1, i1, i1, i1, i1, i1, i1, i1, i1, i1)
            to i1
        hw.output %o1: i1
    }
}
    \end{verbatim}
    }
\end{subfigure}
\begin{subfigure}[t]{\columnwidth}
\centering
\caption{$Maj_{11}$ decomposed in terms of counters with $c=5$ inputs SLS dialect, using \texttt{--maj-to-counter} pass.}
    \label{fig:maj11-decomp-counter}
{\scriptsize
\begin{verbatim}

module {
  hw.module @test(in %arg0: i1, in %arg1: i1,..., in %arg10: i1, out o1: i1) {
    %0 = sls.constant {value = false}
    %1 = sls.constant {value = true}
    %2:3 = sls.counter(%arg0, %arg1, %arg2, %arg3, %arg4 : i1, i1, i1, i1, i1) to i1, i1, i1
    %3:3 = sls.counter(%arg5, %arg6, %arg7, %arg8, %arg9 : i1, i1, i1, i1, i1) to i1, i1, i1
    %4:3 = sls.counter(%arg10, %0, %2#0, %3#0 : i1, i1, i1, i1) to i1, i1, i1
    %5:3 = sls.counter(%1, %2#1, %3#1, %4#1 : i1, i1, i1, i1) to i1, i1, i1
    %6:3 = sls.counter(%0, %2#2, %3#2, %4#2, %5#1 : i1, i1, i1, i1, i1) to i1, i1, i1
    %7:2 = sls.counter(%1, %5#2, %6#1 : i1, i1, i1) to i1, i1
    %8:2 = sls.counter(%6#2, %7#1 : i1, i1) to i1, i1
    hw.output %8#0 : i1
  }
}

% \end{minted}
\end{verbatim}
}

\end{subfigure}
\begin{subfigure}[t]{\columnwidth}
\centering
\caption{The counters with $c=5$ composed using $Maj_{9}$, $and$, $or$ and $not$ operations, using \texttt{--counter-to-maj} pass.}
    \label{fig:maj11}
    {\scriptsize

          \begin{verbatim}

module {
  hw.module @test(in %arg0: i1, in %arg1: i1,..., in %arg10: i1, out o1: i1) {
    %0 = sls.constant {value = false}
    %1 = sls.constant {value = true}
    %2 = sls.maj(%arg0, %arg1, %arg2, %arg3, %arg4, %1, %1, %1, %1 : i1, i1, i1, i1, i1, i1, i1, i1, i1) to i1
    %3 = sls.maj(%arg0, %arg1, %arg2, %arg3, %arg4, %0, %1, %1, %1 : i1, i1, i1, i1, i1, i1, i1, i1, i1) to i1
    %4 = sls.maj(%arg0, %arg1, %arg2, %arg3, %arg4, %0, %0, %1, %1 : i1, i1, i1, i1, i1, i1, i1, i1, i1) to i1
    ...
    hw.output %73 : i1
  }
}
          \end{verbatim}
    }
\end{subfigure}
    \caption{Expressing the $Maj_{11}$~($n=11$) using $M_9$~($k=9)$ operations using the proposed flow. $argi$ indicates the $i^{th}$ input and $o1$ is the output. $i1$ indicates Boolean inputs and output.}
    \label{fig:decomposition}
\end{figure}

\noindent

\section{Concluding Remarks}\label{sec:conclusion}
\noindent Decomposition of large-input Majority Boolean functions have been studied for at least six decades for its relevance in circuit complexity theory and more recently for taking advantage of emerging computing devices. In this work, for the first time, we introduce a construction that can express $M_n$ using $\mathcal{O}(n)$ many such $M_k$ functions. The decomposition, using a counter-tree approach, is closest to the optimal lower bound reported recently~\cite{complexity_ccc}. We also explored an alternative construction using partition function, where the theoretical complexity turns out to be worse. The counter graph-based construction is experimentally validated demonstrating excellent match between experimental results and theoretical bounds. We hope that this study will influence further research in connecting theoretical results to practical implementations. The possibility to utilize counter graphs for general logic structure manipulation in Majority and Threshold logic systems remain to be explored as well.

\section*{Acknowledgment}
\noindent Subhamoy Maitra acknowledges the funding support provided by the ``Information Security Education and Awareness (ISEA) Project phase - III, Cluster - Cryptography, initiatives of MeitY, Grant No. F.No. L-14017/1/2022-HRD".



\bibliographystyle{elsarticle-num}
\bibliography{sigproc}

\end{document}